\documentclass[11pt,a4paper]{article}
\usepackage[utf8]{inputenc}
\usepackage[T1]{fontenc}
\usepackage{amsmath,amssymb,amsfonts,amsthm,mathtools}
\usepackage{bm}
\usepackage{braket}
\usepackage{geometry}
\usepackage{booktabs}
\usepackage{enumitem}
\usepackage{tikz}
\usetikzlibrary{arrows.meta,positioning,calc}
\usepackage[colorlinks=true,linkcolor=blue,citecolor=blue,urlcolor=blue]{hyperref}
\newtheorem{theorem}{Theorem}[section]
\newtheorem{proposition}[theorem]{Proposition}

\newtheorem{corollary}[theorem]{Corollary}
\theoremstyle{remark}
\newtheorem{remark}[theorem]{Remark}
\theoremstyle{definition}

\newcommand{\Tr}{\operatorname{Tr}}
\newcommand{\HS}{\operatorname{HS}}

\newcommand{\I}{\mathbb I}
\newcommand{\Hop}{\mathcal H_+}
\newcommand{\Hreg}{\mathcal H_{+,\mu}}
\newcommand{\Kreal}{\mathcal K_{\mathbb R}}
\newcommand{\Kcomp}{\mathcal K_{\mathbb C}}
\newcommand{\Aeven}{\mathcal A_{\mathrm{even}}}
\newcommand{\norm}[1]{\lVert #1\rVert}
\newcommand{\abs}[1]{\lvert #1\rvert}

\title{\bfseries A Truncated Majorana}
\author{
Ali Vahedi \\[1ex]
\small \textit{Department of Physics, Kharazmi University, Tehran, Iran} \\[1ex]
\small \texttt{vahedi@khu.ac.ir}
}
\date{August 31, 2026}

\begin{document}
\maketitle

\begin{abstract}
We investigate the quantum field dynamics of truncating a Majorana wave packet via a time-dependent channel shutter. Modeling truncation as a controlled modification of chiral propagation rather than a literal spatial cut, we analyze a $1+1$D massless Majorana field subjected to a time-dependent rotation between its chiral components. We demonstrate that this protocol manifests two complementary physical limits. Globally, a mismatch between early- and late-time channel identifications induces a fermionic Bogoliubov transformation. We prove this asymptotic change universally generates an infrared soft-mode memory, $\beta(\omega, \nu) \propto (\omega+\nu)^{-1}$, leading to a logarithmic divergence in the Hilbert--Schmidt norm. This orthogonality-catastrophe-like obstruction persists even under infinitely smooth switching in the massless limit. Conversely, in the number-conserving regime where pair production vanishes, the shutter acts as a purely causal filter. We establish exact local field identities showing that, restricted to the even local observable algebra, the retained sector is exactly equivalent to a single-particle state while the discarded sector reduces to the vacuum. Ultimately, we show that global infrared memory and local causal truncation are complementary diagnostics of the same dynamical operation. These results provide a rigorous field-theoretic foundation for time-dependent control in topological platforms, cleanly separating effective operational benchmarks from microscopic boundary dynamics.
\end{abstract}

\section{Introduction}\label{sec:intro}
The phrase ``truncated particle'' is physically suggestive but potentially misleading. A particle is a state of a quantum field, not an object that can be cut at a chosen spacetime point. Nevertheless, a time-dependent shutter, boundary, or channel switch can remove part of a wave packet's future propagation. The central question is then not whether a Majorana particle can be literally divided, but how a quantum field responds when its allowed channel structure changes during the propagation of an excitation.

This question is especially natural for Majorana systems. Majorana fields are real fermionic fields, their physical local observables belong to the even part of the graded canonical anticommutation-relation (CAR) algebra, and a time-dependent operation can mix positive and negative frequencies even when its action on the classical real field is an orthogonal transformation \cite{Haag1996,Araki1970}. The recent truncated-photon problem showed that a shutter can have two simultaneous descriptions: globally, the field can acquire nontrivial vacuum correlations, while locally, observers outside the switching region can recover the corresponding reference state. A useful Majorana analogue should preserve this conceptual unity rather than replace it by two unrelated toy models \cite{Rukan2026}.

We therefore formulate the paper around \emph{one shutter problem}, analyzed through two complementary limits. The first is the full temporal quantum-field response. We prescribe a real orthogonal channel map $S(t)\in SO(2)$ acting on two chiral Majorana components. Its early- and late-time limits,
\begin{equation}
S_-:=\lim_{t\to-\infty}S(t),\qquad S_+:=\lim_{t\to+\infty}S(t),
\end{equation}
represent the channel identification before and after the shutter operation. Relative to a fixed positive-frequency complex structure, the same operation has ordinary and anomalous Bogoliubov blocks, $\alpha$ and $\beta$. Our main theorem shows that an asymptotic change $S_+ \neq S_-$ produces a universal low-frequency tail and a logarithmic infrared divergence of the Hilbert--Schmidt norm in the massless infinite-volume representation. This is the global, soft-mode response of the shutter.

The second limit isolates the causal truncation mechanism when the shutter preserves fermion number. In that regime $\beta=0$ and the same conceptual operation is represented as a one-particle scattering problem: part of the incoming wave packet is retained and part is sent into a discarded sector. This limit is important because it permits a direct local statement. If the microscopic causal dynamics establishes the appropriate field identities outside the switching region, then the retained region is exactly equivalent to the corresponding one-particle reference state and the discarded region to vacuum on the even local CAR algebra. A bare beam-splitter decomposition does not prove this identity; the causal field relation is the essential input.

The two limits therefore answer different questions about the same physical idea. The full pair-producing shutter asks: \emph{what does the changing channel identification do to the vacuum and to the global Fock representation?} The number-conserving limit asks: \emph{what remains locally observable when the wave packet is causally truncated without vacuum pair creation?} Keeping these questions within one shutter framework makes clear why exact local equivalence and infrared Bogoliubov memory need not occur under identical assumptions.

The temporal map is deliberately presented as a controlled benchmark rather than as the exact input-output relation of a particular microscopic boundary. In a microscopic Majorana platform a natural starting point is a local quadratic coupling,
\begin{equation}
H(t)=H_0+\frac{i}{2}g(t)\,\chi_R(0,t)\chi_L(0,t),
\label{eq:microscopic}
\end{equation}
with real $g(t)$. Since distinct Hermitian Majorana fields anticommute, $(\chi_R\chi_L)^\dagger=-\chi_R\chi_L$, so the perturbation is Hermitian. Such a Hamiltonian generally produces a frequency-dependent, retarded scattering kernel rather than the zero-memory map used below. The present analysis therefore identifies what follows from the asymptotic channel change itself; deriving $S(t)$ from a concrete boundary or mass-profile Hamiltonian is the next microscopic step.

The connection with fermionic dynamical-Casimir physics is direct but not identical. Francica studied particle production and work statistics for a massless fermionic field with moving boundaries, while Fosco and Hansen analyzed a massive Dirac field with time-dependent reflecting boundaries and constructed a unitary Bogoliubov approximation \cite{Francica2022,FoscoHansen2024}. Those works derive the mode transformation from specified dynamics. Here we prescribe the effective channel map so that its infrared representation-theoretic content can be isolated cleanly.

The infrared result is structurally related to the orthogonality catastrophe: changing an asymptotic scattering identification can reorganize infinitely many soft modes \cite{Anderson1967}. It belongs to the broader family of infrared phenomena in massless quantum field theory, but it is not a Bloch--Nordsieck calculation and it is not a chiral-anomaly or index-theorem result \cite{BlochNordsieck1937,Yennie1961}. The massless Thirring and Schwinger models provide useful $1+1$-dimensional reminders that continuum infrared modes and finite-volume zero modes require separate treatment \cite{Thirring1958,Schwinger1962,Coleman1975}.

The experimental motivation is time-dependent control of effective Majorana couplings in topological-superconductor platforms \cite{Alicea2012,Lutchyn2018,Zhang2019}. The paper does not claim that the switching region is a scrambling horizon, that a Majorana excitation is literally split into spatial pieces, or that every microscopic shutter must exhibit the infrared logarithm. Instead, it supplies a controlled field-theoretic benchmark against which a microscopic realization can be tested.

Figure~\ref{fig:logic} summarizes the logical structure.
\begin{figure}[t]
\centering
\begin{tikzpicture}[node distance=9mm and 13mm,
box/.style={draw, rounded corners, align=center, minimum width=4.2cm, minimum height=1.0cm},
arr/.style={-{Latex[length=2mm]}, thick}]
\node[box] (A) {Temporal orthogonal map\\$S(t)\in SO(2)$};
\node[box, right=of A] (B) {Positive-frequency splitting\\$\alpha,\beta$};
\node[box, right=of B] (C) {IR mismatch\\$S_+ \neq S_-$};
\node[box, below=of B] (D) {Regulated Fock implementation\\$\beta_\mu\in\mathcal L^2$};
\node[box, below=of A] (E) {Separate causal\\number-conserving shutter};
\node[box, below=of E] (F) {Conditional even-algebra\\local equivalence};
\draw[arr] (A)--(B)--(C); \draw[arr] (B)--(D); \draw[arr] (E)--(F);
\draw[dashed, -{Latex[length=2mm]}] (A) to[bend right=18] node[below]{not an implication} (F);
\end{tikzpicture}
\caption{Logical separation of the two benchmark layers. The infrared theorem belongs to the temporal orthogonal channel model. The local-equivalence result belongs to the separate causal, number-conserving shutter.}
\label{fig:logic}
\end{figure}

\section{Majorana fields and the CAR algebra}\label{sec:car}
We consider two massless chiral Majorana fields, $\chi_R$ and $\chi_L$, obeying
\begin{equation}
(\partial_t+\partial_x)\chi_R=0,\qquad (\partial_t-\partial_x)\chi_L=0.
\end{equation}
The real one-particle space is
\begin{equation}
\Kreal=L^2(\mathbb R,dt;\mathbb R^2),\qquad \Kcomp=\Kreal\otimes_{\mathbb R}\mathbb C.
\end{equation}
Its real inner product is
\begin{equation}
(f,g)_{\mathbb R}:=\int_{-\infty}^{\infty}dt\,f(t)^Tg(t),
\end{equation}
where $T$ is ordinary transpose on the two real components. The positive-frequency Hilbert space is
\begin{equation}
\Hop=L^2(\mathbb R_+,d\omega)\otimes\mathbb C^2.
\end{equation}
The positive-frequency projector $P_+$ determines the complex structure $J=i(2P_+-1)$ on $\Kcomp$; the overall sign convention is immaterial for the implementability criterion.

The \emph{canonical anticommutation relations (CAR)} are the defining relations of the fermionic field algebra. For real test functions $f,g$,
\begin{equation}
\{\chi(f),\chi(g)\}=(f,g)_{\mathbb R}\,\I.
\end{equation}
In the frequency convention
\begin{align}
\chi_R(u)&=\int_0^\infty\frac{d\omega}{\sqrt{2\pi}}\left[a_R(\omega)e^{-i\omega u}+a_R^\dagger(\omega)e^{i\omega u}\right],\\
\chi_L(v)&=\int_0^\infty\frac{d\omega}{\sqrt{2\pi}}\left[a_L(\omega)e^{-i\omega v}+a_L^\dagger(\omega)e^{i\omega v}\right],
\end{align}
we impose
\begin{equation}
\{a_i(\omega),a_j^\dagger(\nu)\}=\delta_{ij}\delta(\omega-\nu),\qquad \{a_i(\omega),a_j(\nu)\}=0.
\end{equation}
Thus $\{\chi_i(x,t),\chi_j(y,t)\}=\delta_{ij}\delta(x-y)$ at equal time.

A normalized one-quasiparticle wave packet is
\begin{equation}
\ket{\xi}=a^\dagger(\xi)\ket{0},\qquad \norm{\xi}^2=\int_0^\infty d\omega\,\abs{\xi(\omega)}^2=1.
\end{equation}
The phrase ``one Majorana excitation'' therefore means a one-quasiparticle state in the chosen positive-frequency representation; it does not denote a localized classical Majorana particle.

\section{Temporal channel model and early/late-time limits}\label{sec:model}
Let $S:\mathbb R\to SO(2)$ be absolutely continuous with $\dot S$ compactly supported. The idealized zero-memory relation is
\begin{equation}
\bm\chi_{\rm out}(t)=S(t)\bm\chi_{\rm in}(t),\qquad \bm\chi=\begin{pmatrix}\chi_R\\\chi_L\end{pmatrix}.
\label{eq:channel}
\end{equation}
Because $S(t)^TS(t)=I$ pointwise, $O_S:f\mapsto Sf$ preserves $(\cdot,\cdot)_{\mathbb R}$ and hence defines a CAR automorphism.

The physical content of the protocol is encoded in its asymptotic limits. We assume
\begin{equation}
S(t)\longrightarrow S_- \quad(t\to-\infty),\qquad S(t)\longrightarrow S_+ \quad(t\to+\infty),
\label{eq:asymptotic}
\end{equation}
with $S_\pm\in SO(2)$. The early-time matrix $S_-$ specifies how the incoming channels are identified before the shutter changes, while $S_+$ specifies the late-time identification after the operation. If $S_+=S_-$, the protocol has no asymptotic channel memory and the infrared logarithm derived below is absent. If $S_+ \neq S_-$, infinitely soft modes can resolve the change because their frequencies are too small to average over the switching event.

A convenient parameterization is $S(t)=R[\vartheta(t)]$ with
\begin{equation}
R(\vartheta)=\begin{pmatrix}\cos\vartheta&\sin\vartheta\\-\sin\vartheta&\cos\vartheta\end{pmatrix},
\end{equation}
where $\vartheta(t)$ is constant outside a compact interval. We use $\vartheta=0$ as the open-channel basis convention. The value $\vartheta=\pi/2$ exchanges the two channel labels. Calling this ``closed'' is only a basis convention; it is not a derivation of a reflecting microscopic boundary.

The zero-memory assumption is strong. A local time-dependent Hamiltonian such as Eq.~\eqref{eq:microscopic} generally leads to an input-output relation involving a retarded kernel. The map in Eq.~\eqref{eq:channel} can therefore be regarded as the ideal limit in which the relevant switching is local in the retarded time variable and the internal dynamics has been reduced to an instantaneous orthogonal channel rotation. Establishing a microscopic scaling limit in which this occurs would be an interesting future problem.

\section{Bogoliubov decomposition}\label{sec:bog}
Let $O_S$ act on $\Kreal$. Relative to the positive/negative-frequency decomposition,
\begin{equation}
\alpha=P_+O_SP_+,\qquad \beta=P_+O_SP_-.
\end{equation}
The anomalous block $\beta$ measures the failure of $O_S$ to preserve the chosen positive-frequency subspace. For $\omega, \nu>0$ its kernel, away from distributional elastic terms, is
\begin{equation}
\beta(\omega,\nu)=\frac{1}{2\pi}\int_{-\infty}^{\infty}dt\,e^{i(\omega+\nu)t}S(t).
\label{eq:betaformal}
\end{equation}
Since the asymptotic constants produce only the elastic distribution at zero total frequency, the anomalous sector may be written more cleanly as
\begin{equation}
\beta(\omega,\nu)=\frac{1}{2\pi i(\omega+\nu)}\int_{-\infty}^{\infty}dt\,e^{i(\omega+\nu)t}\dot S(t).
\label{eq:betaderiv}
\end{equation}
The CAR constraints are
\begin{align}
\alpha\alpha^\dagger+\beta\beta^\dagger&=\I,& \alpha\beta^T+\beta\alpha^T&=0,\\
\alpha^\dagger\alpha+\beta^*\beta^T&=\I,& \alpha^\dagger\beta+\beta^*\alpha^T&=0.
\label{eq:CARconstraints}
\end{align}
They follow from orthogonality of $O_S$ after decomposing the real one-particle space into positive and negative frequencies. Thus CAR preservation is automatic, whereas unitary implementability on the chosen Fock representation is an additional condition.

\begin{remark}
A nonzero $\beta$ is a statement about the relation between the temporal map and the chosen complex structure. It becomes a physical particle-production prediction only after a microscopic dynamics is supplied that realizes the map.
\end{remark}

\section{Ultraviolet regularity}\label{sec:uv}
\begin{proposition}[Ultraviolet Hilbert--Schmidt bound]\label{prop:uv}
Suppose $S\in C^2(\mathbb R;SO(2))$, $\dot S$ has compact support, and $\ddot S\in L^1(\mathbb R)$. Then
\begin{equation}
\norm{\beta(\omega,\nu)}_F\leq \frac{C}{(\omega+\nu)^2}
\end{equation}
for large $\omega+\nu$, and the ultraviolet contribution to $\norm{\beta}_{\HS}^2$ is finite.
\end{proposition}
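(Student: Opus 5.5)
Starting from the derivative form \eqref{eq:betaderiv}, I will integrate by parts once more in $t$. Write $\Omega:=\omega+\nu>0$. Because $\dot S$ has compact support, say in $[t_-,t_+]$, and $S\in C^2$, the function $\dot S$ is $C^1$ and vanishes together with its boundary values at $t_\pm$. Hence
\begin{equation}
\int_{-\infty}^{\infty}dt\,e^{i\Omega t}\dot S(t)=-\frac{1}{i\Omega}\int_{-\infty}^{\infty}dt\,e^{i\Omega t}\ddot S(t),
\end{equation}
with no boundary terms. (Continuity of $\dot S$ at the endpoints of its support gives $\dot S(t_\pm)=0$.) Substituting into \eqref{eq:betaderiv} gives the exact representation
\begin{equation}
\beta(\omega,\nu)=-\frac{1}{2\pi i^2\Omega^2}\int_{-\infty}^{\infty}dt\,e^{i\Omega t}\ddot S(t)=\frac{1}{2\pi\Omega^2}\widehat{\ddot S}(\Omega),
\end{equation}
valid for all $\omega,\nu>0$ in the anomalous sector.

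The pointwise bound then follows from $\norm{\widehat{\ddot S}(\Omega)}_F\le\int dt\,\norm{\ddot S(t)}_F=\norm{\ddot S}_{L^1}$. This yields $\norm{\beta(\omega,\nu)}_F\le C/\Omega^2$ with $C=\norm{\ddot S}_{L^1}/(2\pi)$, uniformly in $\Omega$, so in particular for large $\Omega$. The entries of $\ddot S$ are integrable, being continuous and compactly supported; the hypothesis $\ddot S\in L^1$ is in any case assumed.

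For the Hilbert--Schmidt statement I will integrate the squared kernel over the ultraviolet region. Fix a cutoff $\Lambda>0$ and define the UV region as $\{\omega,\nu>0:\ \omega+\nu\ge\Lambda\}$. Passing to the variables $\Omega=\omega+\nu$ and $\omega\in[0,\Omega]$, the measure $d\omega\,d\nu$ becomes $d\Omega\,d\omega$. This gives
\begin{equation}
\iint_{\omega+\nu\ge\Lambda}\norm{\beta(\omega,\nu)}_F^2\,d\omega\,d\nu\le C^2\int_\Lambda^\infty\frac{\Omega\,d\Omega}{\Omega^4}=\frac{C^2}{2\Lambda^2}<\infty.
\end{equation}
Here the phase-space factor $\Omega$ is the length of the $\omega$-fibre at fixed total frequency. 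Contrast the first-order bound $\norm{\beta}_F\le\norm{\dot S}_{L^1}/(2\pi\Omega)$ read off from \eqref{eq:betaderiv}: it would give $\int^\infty\Omega^{-1}d\Omega$, which diverges logarithmically in the UV. That is exactly why the second derivative is needed.

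The main point requiring care is justifying the second integration by parts, namely the vanishing of boundary terms. I will handle it by noting that $\dot S$ is continuous on $\mathbb R$ and zero outside a compact set, so $e^{i\Omega t}\dot S(t)$ vanishes at both ends of any interval containing the support. A secondary subtlety is that \eqref{eq:betaderiv} already discards the elastic $\delta(\Omega)$ term; since $\Omega\ge\Lambda>0$ in the UV region, that distributional piece plays no role. The infrared region $\Omega<\Lambda$ is explicitly excluded from the statement; it is treated separately by the main theorem.
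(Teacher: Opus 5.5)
Your proposal is correct and is essentially the paper's proof. Both use a second integration by parts with vanishing boundary terms, the bound $\norm{\widehat{\ddot S}(\Omega)}_F\le\norm{\ddot S}_{L^1}$, and square-integrability of $(\omega+\nu)^{-4}$ over an ultraviolet region; you take that region as $\{\omega+\nu\ge\Lambda\}$, whereas the paper uses the region where $\omega$ alone exceeds a cutoff. Your prefactor $+1/(2\pi\Omega^2)$ is in fact the correct one relative to \eqref{eq:betaderiv}, and the paper's displayed minus sign is a slip, though the sign does not affect the bound.
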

\begin{proof}
Integrating Eq.~\eqref{eq:betaderiv} once more by parts gives
\begin{equation}
\beta(\omega,\nu)=-\frac{1}{2\pi(\omega+\nu)^2}\int dt\,e^{i(\omega+\nu)t}\ddot S(t).
\end{equation}
The $L^1$ assumption gives the pointwise bound. Since $\int_\Omega^\infty d\omega\int_0^\infty d\nu\,(\omega+\nu)^{-4}<\infty$, the ultraviolet part is Hilbert--Schmidt.
\end{proof}

The physical lesson is simple: smoother switching suppresses hard pair production. This is an ultraviolet statement only. It does not change the soft-mode memory controlled by $S_+-S_-$.

\section{Infrared obstruction}\label{sec:ir}
Define
\begin{equation}
\Delta S:=S_+-S_-.
\end{equation}
If $\int |t|\,\norm{\dot S(t)}_Fdt<\infty$, then
\begin{equation}
\int dt\,e^{ist}\dot S(t)=\Delta S+O(s),\qquad s\to0^+.
\label{eq:fourierIR}
\end{equation}

\begin{theorem}[Infrared logarithm]\label{thm:IR}
Let $\beta_{\mu,\omega_*}$ be the restriction of $\beta$ to $\omega, \nu\in[\mu,\omega_*]$. If $\Delta S \neq 0$, then
\begin{equation}
\norm{\beta_{\mu,\omega_*}}_{\HS}^2 =\frac{\norm{\Delta S}_F^2}{4\pi^2}\log\frac{\omega_*}{\mu}+O(1),\qquad \mu\downarrow0.
\label{eq:IRmain}
\end{equation}
If $\Delta S=0$, this leading logarithm is absent.
\end{theorem}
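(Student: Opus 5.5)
We use Eq.~\eqref{eq:betaderiv}, writing $s=\omega+\nu$ and
\begin{equation*}
F(s):=\int dt\,e^{ist}\dot S(t),\qquad \beta(\omega,\nu)=\frac{F(s)}{2\pi i s}.
\end{equation*}
We assume the moment condition behind Eq.~\eqref{eq:fourierIR}; it holds automatically because $\dot S$ has compact support. Then $F(s)=\Delta S+R(s)$ with $\norm{R(s)}_F\le C s$ for $0<s\le 2\omega_*$. Since $F$ is bounded by $\norm{\dot S}_{L^1}$, this is a uniform bound on the whole region of integration. The Hilbert--Schmidt norm of the restricted kernel is
\begin{equation*}
\norm{\beta_{\mu,\omega_*}}_{\HS}^2=\frac{1}{4\pi^2}\int_\mu^{\omega_*}\!\!d\omega\int_\mu^{\omega_*}\!\!d\nu\,\frac{\norm{F(\omega+\nu)}_F^2}{(\omega+\nu)^2}.
\end{equation*}
This uses only the anomalous part of the kernel, as the paper prescribes; the elastic distributions sit at $s=0$ and lie outside $[\mu,\omega_*]^2$.

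Next we expand $\norm{F}_F^2=\norm{\Delta S}_F^2+2\langle \Delta S,R\rangle_F+\norm{R}_F^2$. The cross and remainder terms are bounded by $C's$. They therefore contribute at most
\begin{equation*}
\int\!\!\int_{[0,\omega_*]^2} \frac{C'}{\omega+\nu}\,d\omega\,d\nu,
\end{equation*}
which is finite, since $1/s$ is integrable in two dimensions near the origin (polar coordinates give a factor $r\,dr/r$). These terms are therefore $O(1)$ uniformly in $\mu$.

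The leading term is the elementary integral
\begin{equation*}
I(\mu)=\int_\mu^{\omega_*}\!\!\int_\mu^{\omega_*}\frac{d\omega\,d\nu}{(\omega+\nu)^2}.
\end{equation*}
Integrating in $\nu$ gives $\int_\mu^{\omega_*}\bigl[(\omega+\mu)^{-1}-(\omega+\omega_*)^{-1}\bigr]d\omega$, which evaluates to
\begin{equation*}
I(\mu)=\log\frac{\omega_*+\mu}{2\mu}-\log\frac{2\omega_*}{\omega_*+\mu}=\log\frac{(\omega_*+\mu)^2}{4\mu\omega_*}=\log\frac{\omega_*}{\mu}-2\log 2+o(1).
\end{equation*}
Multiplying by $\norm{\Delta S}_F^2/4\pi^2$ yields Eq.~\eqref{eq:IRmain}. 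If $\Delta S=0$, only the $O(1)$ terms survive, so the logarithm is absent.

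The one step needing care is the justification of Eq.~\eqref{eq:betaderiv} as the correct off-diagonal kernel, namely that the boundary terms from integration by parts produce only the $\delta(s)$ elastic piece, which the restriction $\omega,\nu\ge\mu$ excludes. I would verify this by regularizing with $e^{-\epsilon|t|}$ and letting $\epsilon\to0$ at fixed $s\ge 2\mu>0$. The linear bound on $R$, used in the second step, follows from $\abs{e^{ist}-1}\le s\abs{t}$ and the first-moment condition; it is routine.
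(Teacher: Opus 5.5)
Your proof is correct and is essentially the paper's argument. Both write $\beta=F(s)/(2\pi i s)$ with $F(s)=\Delta S+O(s)$ from the first-moment bound, note that the cross and remainder terms in $\lVert F\rVert_F^2/s^2$ are $O(1/s)$ and hence integrable near the origin, and evaluate $\int_\mu^{\omega_*}\!\int_\mu^{\omega_*}(\omega+\nu)^{-2}\,d\omega\,d\nu=\log(\omega_*/\mu)-\log 4+o(1)$ exactly; your version is only somewhat more careful than the paper's (a linear bound on $R$ over the whole square rather than just for small $s$, an explicit two-dimensional integrability check for $1/s$, and a regularization justification of Eq.~\eqref{eq:betaderiv}).
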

\begin{proof}
Equations~\eqref{eq:betaderiv} and \eqref{eq:fourierIR} imply
\begin{equation}
\beta(\omega,\nu)=\frac{\Delta S}{2\pi i(\omega+\nu)}+O(1)
\end{equation}
for small $\omega+\nu$. Hence
\begin{equation}
\norm{\beta(\omega,\nu)}_F^2=\frac{\norm{\Delta S}_F^2}{4\pi^2(\omega+\nu)^2}+O((\omega+\nu)^{-1})+O(1).
\end{equation}
The two remainder terms are integrable near the origin. The singular part gives
\begin{align}
\int_\mu^{\omega_*}d\omega\int_\mu^{\omega_*}d\nu\frac{1}{(\omega+\nu)^2}
&=\log\left[\frac{(\omega_*+\mu)^2}{4\omega_*\mu}\right]\\
&=\log\frac{\omega_*}{\mu}-\log4+o(1).
\end{align}
This proves the result.
\end{proof}

For $S_+=I$ and $S_-=R(\vartheta_-)$,
\begin{equation}
\norm{\Delta S}_F^2=4(1-\cos\vartheta_-),
\end{equation}
so
\begin{equation}
\boxed{\norm{\beta_{\mu,\omega_*}}_{\HS}^2 =\frac{1-\cos\vartheta_-}{\pi^2}\log\frac{\omega_*}{\mu}+O(1).}
\label{eq:rotationIR}
\end{equation}
For $\vartheta_-=\pi/2$ this becomes $\pi^{-2}\log(\omega_*/\mu)+O(1)$.

\begin{remark}[Orthogonality-catastrophe interpretation]
The logarithm has the same structural origin as an orthogonality catastrophe: the overlap of states associated with two different asymptotic identifications is degraded by an unbounded set of soft modes \cite{Anderson1967}. This analogy is stronger and more precise than simply calling the effect ``dynamical Casimir radiation.'' It does not imply that the mechanism is identical to soft-photon radiation in QED.
\end{remark}

\begin{remark}[Particle number]
For an implementable fermionic Bogoliubov transformation,
\begin{equation}
\langle N_{\rm out}\rangle=\Tr(\beta^\dagger\beta)=\norm{\beta}_{\HS}^2.
\end{equation}
Thus the regulated expectation grows logarithmically as the infrared cutoff is lowered. The correct conclusion in the unregulated theory is not an ``infinite-particle Fock vector'', but failure of unitary implementability in the original Fock representation.
\end{remark}

\section{Infrared regulators, zero modes, and implementability}\label{sec:fock}
Introduce
\begin{equation}
\Hreg=L^2([\mu,\infty),d\omega)\otimes\mathbb C^2.
\end{equation}
For every fixed $\mu>0$, the infrared logarithm is cut off and the ultraviolet proposition supplies the remaining Hilbert--Schmidt control under the stated regularity assumptions.

In finite volume the continuum integral becomes a discrete sum. With antiperiodic (Neveu--Schwarz) boundary conditions, the lowest positive massless frequency is of order $\pi/L$. With periodic (Ramond) boundary conditions there is a zero mode, together with nonzero modes beginning at order $2\pi/L$. The Ramond zero mode is not part of the positive-frequency continuum used in Theorem~\ref{thm:IR}; it forms a finite-dimensional Clifford algebra factor. Consequently it does not by itself create the logarithmic continuum divergence. It can, however, change the ground-state degeneracy and fermion-parity sector and must be treated separately when discussing finite-size states, especially in systems with topological Majorana zero modes.

A mass gap also removes the massless continuum down to zero energy. For a relativistic dispersion $E(k)=\sqrt{k^2+m^2}$ the threshold is $m$. The massless logarithm is therefore replaced by a finite crossover controlled by the gap, although the precise kernel must be derived from the massive microscopic problem rather than obtained by simply replacing $\mu$ with $m$ in Eq.~\eqref{eq:rotationIR}.

The fermionic Shale--Stinespring criterion states that a real orthogonal transformation $O$ is unitarily implementable in the Fock representation determined by $J$ iff
\begin{equation}
[O,J]\in\mathcal L^2,
\end{equation}
equivalently iff its anomalous block $\beta$ is Hilbert--Schmidt \cite{Shale1962,ShaleStinespring1965,Araki1970,Ruijsenaars1977}. Intuitively, the criterion says that the transformation may mix positive and negative frequencies, but the total amount of such mixing must be square-summable. The theorem above shows exactly where that square summability fails.

\begin{corollary}
For fixed $\mu>0$, the regulated temporal map is implementable under the ultraviolet assumptions. If $\Delta S \neq 0$, the family of implementers has no unitary limit in the original massless Fock representation as $\mu\downarrow0$.
\end{corollary}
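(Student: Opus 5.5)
The corollary has two halves, and I will handle them separately: implementability at fixed $\mu>0$, then the absence of a limit as $\mu\downarrow0$.

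\textbf{Fixed $\mu>0$.} The plan is to verify the fermionic Shale--Stinespring criterion on $\Hreg$. Concretely, I need $\norm{\beta_\mu}_{\HS}^2=\int_\mu^\infty\int_\mu^\infty\norm{\beta(\omega,\nu)}_F^2\,d\omega\,d\nu<\infty$. I will split the domain at a fixed $\Omega$.

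\begin{itemize}
\item On the compact piece $[\mu,\Omega]^2$, the representation \eqref{eq:betaderiv} gives $\norm{\beta}_F\le \norm{\dot S}_{L^1}/(2\pi(\omega+\nu))\le \norm{\dot S}_{L^1}/(4\pi\mu)$. This is bounded, so the integral over the compact square is finite.
\item On the remaining region, where $\omega+\nu$ is large, Proposition~\ref{prop:uv} supplies the $(\omega+\nu)^{-2}$ bound and hence square-integrability. Near the edges of the domain I will use the $(\omega+\nu)^{-1}$ bound again, together with the finite strip width.
\end{itemize}

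Since $\beta_\mu$ is Hilbert--Schmidt, the criterion \cite{Shale1962,Araki1970} yields a unitary implementer $U_\mu$. It is unique up to phase, and I fix the phase by requiring $\langle0|U_\mu|0\rangle>0$ whenever this overlap is nonzero.

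\textbf{The limit $\mu\downarrow0$.} I will argue by contradiction. Suppose $U_\mu\to U$ in some unitary sense, say strongly up to phases, on the original massless Fock space. Then $U$ would implement the limiting Bogoliubov transformation, because $\beta_\mu\to\beta$ and $\alpha_\mu\to\alpha$ strongly as the cutoff projections converge. By Shale--Stinespring, $\beta$ would then have to be Hilbert--Schmidt. This contradicts Theorem~\ref{thm:IR}: $\norm{\beta}_{\HS}^2\ge\norm{\beta_{\mu,\omega_*}}_{\HS}^2$ grows like $\frac{\norm{\Delta S}_F^2}{4\pi^2}\log(\omega_*/\mu)\to\infty$.

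A quantitative version of the same point uses the vacuum overlap, $|\langle0|U_\mu|0\rangle|^2=\det(\I-\beta_\mu^\dagger\beta_\mu)^{1/2}$. Using $\log(1-x)\le -x$, this is bounded by $\exp(-\tfrac12\norm{\beta_\mu}_{\HS}^2)\to0$ (the same bound holds up to fermionic pairing conventions). The implemented vacua therefore become asymptotically orthogonal to $\ket0$ and converge weakly to zero, which is incompatible with a unitary limit.

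\textbf{Main obstacle.} The delicate step is making ``no unitary limit'' precise. One must specify a topology, for instance strong convergence modulo phases, and show that such a limit really implements the limiting Bogoliubov map. This requires checking that $U_\mu\chi(f)U_\mu^\dagger\to\chi(O_Sf)$, which uses norm continuity of the CAR field map $f\mapsto\chi(f)$ and strong convergence $O_{S,\mu}\to O_S$.

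If that route becomes messy, I will fall back on the weak-convergence argument. Weak convergence $U_\mu\ket0\rightharpoonup0$ is enough to rule out strong convergence to a unitary, because unitaries preserve norm: a strong limit would carry the unit vector $\ket0$ to a unit vector, not to $0$. This argument is robust to the phase ambiguity because it only uses $|\langle0|U_\mu|0\rangle|$.
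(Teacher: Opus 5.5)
Your main route is correct and is the paper's own argument spelled out. The paper gives no separate proof of the corollary. It uses the infrared cutoff plus Proposition~\ref{prop:uv} to make $\beta_\mu$ Hilbert--Schmidt at fixed $\mu$, and Theorem~\ref{thm:IR} plus the Shale--Stinespring criterion for the failure as $\mu\downarrow0$. Your strip decomposition fills in the first step. Your observation that a strong (modulo phases) unitary limit would, by norm continuity of $f\mapsto\chi(f)$, implement $O_S$ itself, and hence force $\norm{\beta}_{\HS}<\infty$, fills in the second. Like the paper, you presuppose that the regulated map is a genuine orthogonal transformation with anomalous block $\beta_\mu$. A bare compression of $O_S$ to $\abs{\omega}\ge\mu$ is not orthogonal, so this has to come from the chosen regularization.

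Your vacuum-overlap fallback, however, has a real gap. The bound $\abs{\langle0|U_\mu|0\rangle}^2=\det(\I-\beta_\mu^\dagger\beta_\mu)^{1/2}\le\exp(-\tfrac12\norm{\beta_\mu}_{\HS}^2)\to0$ is fine. But it only shows that $U_\mu\ket{0}$ becomes orthogonal to $\ket{0}$; it does not give $U_\mu\ket{0}\rightharpoonup0$. Orthogonality to the vacuum is compatible with a unitary limit. Take the constant family $U_\mu\equiv U=2\chi(f)\chi(g)$, with $f,g$ real-orthonormal and chosen so that $U\ket{0}$ is a two-particle vector. This is an even unitary Bogoliubov implementer with $\langle0|U|0\rangle=0$, and it trivially converges. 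Weak convergence to zero would need $\langle\phi|U_\mu|0\rangle\to0$ for a dense set of $\phi$. That requires control of the pairing kernels $K_\mu$ beyond the single number $\abs{\langle0|U_\mu|0\rangle}$, which you have not supplied. Your claim that the argument ``only uses $\abs{\langle0|U_\mu|0\rangle}$'' is exactly where it breaks. Keep the Shale--Stinespring contradiction as the proof, and either drop the fallback or strengthen it substantially.
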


Let $U_\mu$ be a regulated implementer. The transformed vacuum $U_\mu\ket{0}$ is an even fermionic Gaussian state. A one-quasiparticle excitation is odd under fermion parity and should be regarded as an odd excitation above that even Gaussian background. In finite mode truncations one may write
\begin{equation}
\ket{\Omega_\mu}={\cal N}_\mu\exp\left[-\frac12c^\dagger K_\mu c^\dagger\right]\ket{0},\qquad K_\mu^T=-K_\mu,
\end{equation}
with
\begin{equation}
{\cal N}_\mu=\det(1+K_\mu^\dagger K_\mu)^{-1/4}.
\end{equation}
This formula is not needed for the infrared theorem; it simply makes explicit the fermionic Gaussian structure of the regulated state.

\section{The number-conserving limit of the same shutter}\label{sec:local}
We now isolate the kinematics of truncation without pair production. Let an incoming one-fermion wave packet be scattered into retained channel $a$ and discarded channel $b$ by a number-conserving one-particle unitary $V$:
\begin{equation}
\ket{\Psi_{\rm out}}=\int d\omega\,[\xi_a(\omega)a_a^\dagger(\omega)+\xi_b(\omega)b_b^\dagger(\omega)]\ket{0_a0_b},
\end{equation}
with $\norm{\xi_a}^2+\norm{\xi_b}^2=1$. Here $\beta=0$ exactly.

Tracing over $b$ gives
\begin{equation}
\rho_a=\ket{\xi_a}\bra{\xi_a}+(1-\norm{\xi_a}^2)\ket{0_a}\bra{0_a},
\label{eq:rho}
\end{equation}
where the first term is the unnormalized one-particle projector. Equation~\eqref{eq:rho} already demonstrates why a beam-splitter identity alone is insufficient for exact local equivalence: if $0<\norm{\xi_a}^2<1$, even local observables can distinguish the mixture from a pure one-particle state.

The local statement must therefore include the causal field identity. Let $\mathcal O_L$ and $\mathcal O_R$ be regions whose causal pasts lie entirely in the retained and discarded sectors, respectively. Suppose there exist parity-preserving local $*$-isomorphisms $\alpha_L$ and $\alpha_R$ such that, for every even local observable $A$ and $B$,
\begin{align}
\omega_{\rm out}(\alpha_L(A))&=\omega_1(A),\\
\omega_{\rm out}(\alpha_R(B))&=\omega_0(B).
\end{align}
Then the two local states are exactly equivalent to the one-particle and vacuum reference states. This is a conditional statement, but it is the mathematically correct formulation.

\begin{proposition}[Conditional local equivalence on the even algebra]\label{thm:local}
Under the preceding assumptions, $\omega_{\rm out}$ restricted to $\Aeven(\mathcal O_L)$ is isomorphic to the reference one-particle state, while its restriction to $\Aeven(\mathcal O_R)$ is isomorphic to the vacuum state.
\end{proposition}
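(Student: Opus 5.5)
The statement is essentially a transport-of-structure argument, so I would make precise what ``isomorphic'' means. The claim is that the states $\omega_{\rm out}\circ\alpha_L$ and $\omega_1$ coincide as states on $\Aeven(\mathcal O_L)$. Likewise, $\omega_{\rm out}\circ\alpha_R$ and $\omega_0$ coincide on $\Aeven(\mathcal O_R)$. Here $\alpha_L:\Aeven(\mathcal O_L)\to\Aeven(\mathcal O_L')$ is a parity-preserving $*$-isomorphism onto the algebra where the retained sector is observed. The first step is to check that the maps land in the correct algebras. Since $\alpha_L$ and $\alpha_R$ are parity preserving, they map even elements to even elements. Restricted to $\Aeven$, they are therefore $*$-isomorphisms between even local algebras, and the pullback $\omega_{\rm out}\circ\alpha_L$ is again a state: positivity and normalization follow because $\alpha_L(A^*A)=\alpha_L(A)^*\alpha_L(A)$ and $\alpha_L(\I)=\I$.

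The second step is the identification itself. By hypothesis, $\omega_{\rm out}(\alpha_L(A))=\omega_1(A)$ for every even local $A$, so the two states agree on the whole even local algebra. Continuity extends this agreement to the $C^*$-closure, because states are norm-continuous and $*$-isomorphisms between $C^*$-algebras are isometric. Hence $\omega_{\rm out}|_{\alpha_L(\Aeven(\mathcal O_L))}=\omega_1\circ\alpha_L^{-1}$, which is the asserted equivalence. I would then pass to GNS representations. The map $\pi_1(A)\Omega_1\mapsto\pi_{\rm out}(\alpha_L(A))\Omega_{\rm out}$ is isometric, since inner products are computed from state values that agree. It therefore extends to a unitary intertwiner between the cyclic GNS representations, upgrading ``equal expectation values'' to unitary equivalence of the local GNS triples. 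The same argument with $\alpha_R$ and $\omega_0$ handles the discarded region.

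The step I expect to need care is explaining why the restriction to $\Aeven$ is essential and consistent with Eq.~\eqref{eq:rho}. For the restriction, odd elements have vanishing expectation in both $\omega_1$ and $\omega_0$ by parity superselection, so nothing is lost; but $\alpha_L$ need not extend to the full graded algebra compatibly with the twisted locality between $\mathcal O_L$ and $\mathcal O_R$. For the consistency, the causal-past hypothesis is what excludes the mixture obstruction. The regions are chosen so that the field identities hold exactly there, and the statement concerns $\omega_{\rm out}$ on those regions, not the reduced state $\rho_a$ of a bare beam splitter. I would close by noting that no joint statement on $\mathcal O_L\cup\mathcal O_R$ is claimed, so the two equivalences can hold simultaneously.
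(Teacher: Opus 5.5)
Your proposal is correct and follows the paper's own argument: the hypotheses already state that $\omega_{\rm out}\circ\alpha_L$ and $\omega_{\rm out}\circ\alpha_R$ coincide with $\omega_1$ and $\omega_0$ on the even local algebras, and norm continuity extends this to the $C^*$-closures; your GNS unitary-equivalence step is a harmless strengthening that the paper does not spell out. One bookkeeping point: the statement concerns $\omega_{\rm out}$ on $\Aeven(\mathcal O_L)$ itself, so $\alpha_L$ should map a reference algebra onto $\Aeven(\mathcal O_L)$, not out of $\Aeven(\mathcal O_L)$ into a primed region (and likewise for $\alpha_R$).
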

\begin{proof}
The assumed identities are equality of state functionals after the local algebra identifications. Equality holds on the dense polynomial subalgebras and extends by continuity to the local $C^*$-algebras.
\end{proof}

\subsection{Toy model illustrating the condition}
A concrete finite-mode analogue makes the logical content transparent. Let $c_L,c_R$ be two fermionic modes and consider the one-particle state
\begin{equation}
\ket{\Psi}=c_L^\dagger\ket{0_L0_R}.
\end{equation}
On the left algebra generated by the even bilinears $c_L^\dagger c_L$ and the identity, the state is exactly the one-particle state; on the right algebra it is exactly vacuum. If instead
\begin{equation}
\ket{\Psi_p}=\sqrt p\,c_L^\dagger\ket0+\sqrt{1-p}\,c_R^\dagger\ket0,
\end{equation}
then the reduced left state is $p\ket1\!\bra1+(1-p)\ket0\!\bra0$, and the number operator distinguishes it from $\ket1\!\bra1$ unless $p=1$. Thus the exact local identity is a dynamical condition, not a consequence of channel splitting.

This toy model is deliberately elementary: it does not prove that the full pair-producing temporal shutter has exact local identities. It instead shows why, even in the number-conserving limit of the shutter problem, local equivalence is a causal dynamical statement rather than a consequence of channel splitting alone.

\section{Fidelity and truncation quality}\label{sec:fidelity}
For the number-conserving benchmark let $P_a$ project onto a chosen retained output mode subspace. If $\xi$ is normalized and $V$ is the one-particle scattering map,
\begin{equation}
T_\xi=\norm{P_aV\xi}^2
\end{equation}
is the unconditional probability of finding the fermion in the retained sector. The normalized conditional mode is
\begin{equation}
\eta=\frac{P_aV\xi}{\norm{P_aV\xi}},
\end{equation}
so the conditional mode overlap is unity by construction, whereas the unconditional target-state fidelity is
\begin{equation}
F_\eta=\abs{\braket{\eta|V|\xi}}^2=T_\xi.
\end{equation}
A state-independent temporal-filter fidelity would require an operator norm or restricted-bandwidth figure of merit; a single adiabaticity product such as $\Delta\omega\,\tau_{\rm sw}$ cannot determine it.

\section{Physical interpretation and relation to known infrared structures}\label{sec:physics}
The theorem identifies a clear physical mechanism. At early and late times the system has two different channel identifications. High-frequency modes resolve the switching details and are suppressed by smoothness. Low-frequency modes cannot resolve the finite switching interval and instead respond to the net change $S_-\to S_+$. The Fourier transform of $\dot S$ therefore approaches the constant $\Delta S$, producing the universal $1/(\omega+\nu)$ tail.

This is closely analogous to an orthogonality catastrophe \cite{Anderson1967}: the many-body state associated with one asymptotic identification becomes orthogonal, in the infinite-volume limit, to the state associated with another because infinitely many soft modes participate. The result is also in the broad family of infrared representation problems familiar from massless quantum field theory. It should not, however, be identified with the Bloch--Nordsieck infrared structure of charged QED, where soft photons accompany charged asymptotic states \cite{BlochNordsieck1937,Yennie1961}.

The massless Thirring model and the Schwinger model provide important $1+1$-dimensional examples where infrared and zero-mode physics are central \cite{Thirring1958,Schwinger1962,Coleman1975}. The present theorem is not a calculation in either model. In particular, there is no interacting current algebra, gauge field, or chiral anomaly in the benchmark. The $SO(2)$ rotation acts on two real Majorana channels and need not be interpreted as a chiral phase rotation. For the same reason, there is no direct index-theorem statement. The useful lesson from those theories is instead methodological: in a massless $1+1$-dimensional system, the infrared sector and the finite-volume zero modes must be kept separate from the local oscillator continuum.

For a topological superconductor, this distinction is especially important. A Ramond-like zero mode can encode a finite-dimensional topological degeneracy, whereas the logarithm in Theorem~\ref{thm:IR} arises from an infinite continuum of nonzero low-frequency modes. A single Majorana zero mode therefore does not remove or cause the logarithm. It adds a separate parity-sensitive sector that must be included in a finite-size microscopic model.

The temporal map also has a natural interpretation as an idealized limit of a controlled boundary operation. In a realistic local Hamiltonian, the boundary has internal dynamics and the input-output relation is generally frequency dependent. The present zero-memory map should therefore be regarded as a benchmark for what the asymptotic channel change alone can do. A microscopic model that realizes it would have to be solved and compared with the theorem rather than assumed to coincide with it.

\section{Higher dimensions and interacting Majorana systems}\label{sec:extensions}
The logarithm is not expected to be dimension-independent. It follows from the combination of a one-dimensional massless frequency measure and a kernel proportional to $(\omega+\nu)^{-1}$. In higher spatial dimensions, transverse momentum and phase-space factors alter the infrared integral, and the implementability criterion must be recomputed for the relevant one-particle space. Thus no $3+1$-dimensional conclusion should be inferred from Eq.~\eqref{eq:IRmain} without a separate analysis.

There is likewise a possible future connection to interacting Majorana models such as SYK \cite{SachdevYe1993,Kitaev2015,MaldacenaStanford2016}. The present work does not contain an SYK quench or a holographic construction. A meaningful connection would require a finite regulated set of Majorana modes, a specified quadratic switching protocol, and an interacting Hamiltonian whose nonequilibrium dynamics can then be studied. In particular, the shutter transition region should not be called a scrambling horizon merely because both problems involve many-body entanglement.

\section{Discussion}\label{sec:discussion}
The central physical picture is now that of one time-dependent Majorana shutter viewed at two resolutions. Globally, the shutter changes the asymptotic channel identification and can therefore reorganize the soft vacuum. Locally, the same shutter can, in a number-conserving regime, act as a causal filter that redirects part of an incoming wave packet into a discarded sector. The two descriptions are compatible because they refer to different dynamical limits of the same shutter concept.

For the full temporal realization, the mathematical chain is
\begin{equation}
\boxed{S_+ \neq S_-\quad\Longrightarrow\quad \beta(\omega,\nu)\sim\frac{S_+-S_-}{2\pi i(\omega+\nu)} \quad\Longrightarrow\quad \norm{\beta_\mu}_{\HS}^2\sim \frac{\norm{S_+-S_-}_F^2}{4\pi^2}\log\frac1\mu.}
\label{eq:centralchain}
\end{equation}
This is the global infrared signature of shutter memory. Smoothness controls the ultraviolet tail, whereas the coefficient of the infrared logarithm is fixed by the net change between the early and late channel identifications.

For the number-conserving limit, the corresponding chain is instead
\begin{equation}
\boxed{\beta=0\quad\Longrightarrow\quad \text{no vacuum pair creation}\quad\Longrightarrow\quad \text{causal redistribution of the one-fermion sector}.}
\end{equation}
If, in addition, the microscopic causal evolution gives exact local field identities outside the switching region, then the even local observables reproduce the reference one-particle state on the retained side and vacuum on the discarded side. Thus local equivalence is the operational meaning of truncation in that limit.

This distinction also explains why the two results should not be algebraically conflated. A pair-producing shutter has an even Gaussian vacuum background and an odd quasiparticle excitation above it; its local state is determined by the full covariance data together with the odd excitation. A number-conserving shutter has the original vacuum and a one-particle state, so its local reduction is much simpler. The present paper establishes the exact local theorem only in the latter limit, while the infrared theorem applies to the temporal orthogonal realization. A future microscopic calculation should determine which regime a concrete Majorana boundary occupies and how rapidly it crosses between them.

The phrase ``truncated Majorana'' should therefore be understood operationally. The shutter does not cut a fundamental particle into spatial pieces. It changes the allowed propagation or channel identification of a wave packet. The global field remembers that operation through its soft modes when positive and negative frequencies are mixed, while the local observer can, under stronger causal conditions, see only the retained part of the excitation.

The principal open problem is consequently well defined: construct a microscopic Majorana boundary, mass profile, or network protocol, derive its full in/out scattering kernel, and determine whether its pair-producing and number-conserving limits connect continuously. Such a model would allow the infrared coefficient, local equivalence, entanglement, and finite-size zero-mode physics to be computed within one Hamiltonian rather than benchmarked separately.

\section{Conclusion}\label{sec:conclusion}
We have formulated a field-theoretic framework in which ``truncating'' a Majorana wave packet is treated as a single time-dependent shutter problem rather than as the literal division of a particle. The shutter is characterized by its early- and late-time channel identifications and, in the full temporal realization, by the Bogoliubov transformation induced by their time dependence.

The global response is controlled by the anomalous block. When $S_+ \neq S_-$, the low-frequency kernel has the universal form $\beta(\omega,\nu)\propto(\omega+\nu)^{-1}$, producing a logarithmic infrared obstruction to unitary implementation in the massless infinite-volume Fock representation. Smooth switching protects the ultraviolet sector but does not erase this asymptotic soft memory. Finite volume, a mass gap, or another infrared completion restores a regulated description.

The same shutter idea has a complementary number-conserving limit. There $\beta=0$, the vacuum is unchanged, and the incoming one-fermion excitation is redistributed between retained and discarded sectors. When the underlying causal dynamics supplies the appropriate local field identities, exact equivalence to a one-particle state and vacuum follows on the even local CAR algebra. This is the precise sense in which a Majorana excitation can be operationally truncated without asserting that a fundamental particle has been spatially cut.

The main conceptual outcome is therefore a unified picture: \emph{global infrared memory and local causal truncation are complementary signatures of a time-dependent Majorana shutter}. They need not be simultaneously exact in the same effective limit, and the paper does not assume that they are. A microscopic Hamiltonian or boundary-value problem that realizes both within one model remains the most important next step.

\appendix
\section{Equal-time anticommutator}\label{app:anticom}
For one chiral component,
\begin{align}
\{\chi_R(u),\chi_R(v)\}
&=\int_0^\infty\frac{d\omega}{2\pi}\left(e^{-i\omega(u-v)}+e^{i\omega(u-v)}\right)\\
&=\int_{-\infty}^{\infty}\frac{d\omega}{2\pi}e^{-i\omega(u-v)}=\delta(u-v).
\end{align}
At equal time this gives $\{\chi_R(x,t),\chi_R(y,t)\}=\delta(x-y)$, with the same result for $L$ and vanishing cross-chiral anticommutators.

\section{Infrared integral}\label{app:integral}
For $0<\mu<\omega_*$,
\begin{align}
I(\mu,\omega_*)&=\int_\mu^{\omega_*}d\omega\int_\mu^{\omega_*}d\nu\frac{1}{(\omega+\nu)^2}\\
&=\log\left[\frac{(\omega_*+\mu)^2}{4\omega_*\mu}\right]\\
&=\log\frac{\omega_*}{\mu}-\log4+o(1).
\end{align}
The $O(1)$ contribution therefore cannot modify the logarithmic coefficient.

\section{Finite-dimensional fermionic Gaussian normalization}\label{app:gaussian}
For $K^T=-K$,
\begin{equation}
\ket{\Omega_K}={\cal N}\exp\left[-\frac12c^\dagger Kc^\dagger\right]\ket0
\end{equation}
has normalization
\begin{equation}
{\cal N}=\det(1+K^\dagger K)^{-1/4},
\end{equation}
obtained by reducing $K$ to antisymmetric $2\times2$ blocks. In the infinite-dimensional case this is interpreted through the corresponding Fredholm determinant when defined.

\section{Established and not established}\label{app:scope}
\begin{center}
\begin{tabular}{@{}p{0.46\linewidth}p{0.46\linewidth}@{}}
\toprule
\textbf{Established} & \textbf{Not established}\\
\midrule
CAR preservation of the temporal orthogonal map & A unique microscopic shutter Hamiltonian\\
UV Hilbert--Schmidt estimate under stated regularity & A universal experimental particle-production law\\
IR logarithm for $S_+ \neq S_-$ & Removal of the infrared cutoff in the same Fock representation\\
Regulated Shale--Stinespring implementability & Exact local equivalence for the pair-producing temporal model\\
Conditional even-algebra local equivalence in the number-conserving benchmark & A literal spatially split Majorana particle\\
One-particle fidelity formula for the benchmark & An SYK scrambling or holographic interpretation\\
\bottomrule
\end{tabular}
\end{center}

\end{document}